\def\squarebox#1{\hbox to #1{\hfill\vbox to #1{\vfill}}}
\newcommand{\qed}{\hspace*{\fill}
            \vbox{\hrule\hbox{\vrule\squarebox{.667em}\vrule}\hrule}\smallskip}
\newenvironment{proof}{\begin{trivlist}
\item[\hspace{\labelsep}{\bf\noindent Proof: }]
}{\qed\end{trivlist}}
\newtheorem{THEOREM}{Theorem}[section]
\newenvironment{theorem}{\begin{THEOREM} \hspace{-.85em} {\bf :} }%
                        {\end{THEOREM}}
\newtheorem{rmark}[THEOREM]{Remark}
\newtheorem{xmpl}[THEOREM]{Example}
\newtheorem{observation}[THEOREM]{Observation}
\newtheorem{myclm}[THEOREM]{Claim}
\newenvironment{claim}{\begin{myclm}\rm}{\qed\end{myclm}}
\newcommand{\buchi}{B\"uchi }
\newcommand{\A}{{\cal A}}
\newcommand{\Aphi}{\A_{\varphi}}
\renewcommand{\L}{{\cal L}}
\newcommand{\T}{{\cal T}}
\newcommand{\N}{\mathbb{N}}
\newcommand{\true}{\mbox{\bf true}}
\newcommand{\tuple}[1]{{\langle #1  \rangle}}
\newcommand{\hide}[1]{{}}
\newcounter{sidecommentcounter}
\newcommand{\Path}{\mbox{\it Path}}
\newcommand{\call}{\mbox{\it call}}
\newcommand{\ret}{\mbox{\it ret}}
\newcommand{\nw}{\overline}
\newcommand{\next}{\bigcirc}
\newcommand{\prev}{\mbox{$\bigcirc \hspace{-9pt} - \hspace{2pt}$} }
\newcommand{\munext}{\next_\mu}
\newcommand{\muprev}{\prev_\mu}
\newcommand{\until}{\mbox{\bf U}}
\newcommand{\since}{\mbox{\bf S}}
\newcommand{\sumuntil}{\mbox{$\until^\sigma$}}
\newcommand{\sumsince}{\mbox{$\since^\sigma$}}
\newcommand{\re}{ s_0 }
\newcommand{\dir}{ s_C }
\newcommand{\rhopath}{\mbox{$\rho$-$\Path$}}
\title{Synthesis from Recursive-Components Libraries%
\thanks{For a longer version of this paper see
{\tt http://www.cs.rice.edu/$\sim$vardi/papers}.}}
\author{
Yoad Lustig\thanks{Current address: 
Yahoo! Labs Haifa,
Matam Scientific Industries Center Building \#3,
Matam Park, Haifa, 31905 Israel,
email: yoad@yahoo-inc.com}
\institute{Rice University\\
6100 Main Street\\
Houston, TX 77005-1892, USA}
\email{yoad.lustig@gmail.com}
\and
Moshe Y. Vardi\thanks{
Work supported in part by NSF grants 
CCF-0728882, and CNS 1049862, by BSF grant 9800096, 
and by gift from Intel.}
\institute{Rice University\\
6100 Main Street\\
Houston, TX 77005-1892, USA}
\email{vardi@cs.rice.edu}
}
\begin{document}
\maketitle

\begin{abstract}
Synthesis is the automatic construction of a system from its
specification. In classical synthesis algorithms it is always
assumed that the system is "constructed from scratch" rather than
composed from reusable components. This, of course, rarely happens
in real life. In real life, almost every non-trivial commercial
software system relies heavily on using libraries of reusable
components. Furthermore, other contexts, such as web-service 
orchestration, can be modeled as synthesis of a system from a 
library of components. 

In 2009 we introduced LTL synthesis from libraries of reusable components. 
Here, we extend the work and study synthesis from component libraries
with ``call and return'' control flow structure.  Such control-flow 
structure is very common in software systems.  We define the problem of 
Nested-Words Temporal Logic (NWTL) synthesis from recursive component 
libraries, where NWTL is a specification formalism, richer than LTL, that 
is suitable for ``call and return'' computations.  We solve the problem, 
providing a synthesis algorithm, and show the problem is 
2EXPTIME-complete, as standard synthesis.
\end{abstract}

\section{Introduction}
The design of almost every non-trivial software system is based on
using libraries of reusable components. Reusable components come in
many forms: functions, objects, or others. Nevertheless, the basic idea 
of constructing systems from reusable components underlies almost all 
software construction.  Indeed, almost every system involves many 
sub-systems, each dealing with different engineering aspects and each 
requiring different expertise. In practice, the developer of a commercial 
product rarely develops all the required sub-systems herself. 
For example, a software application for an email client contains
sub-systems for managing graphic user interface (as well as many other 
sub-systems).  Rarely will a developer of the email-client system develop 
the basic graphic-user-interface functionality as part of the project.  
Instead, basic sub-systems functionality is usually acquired as a  
\emph{library}, i.e., a collection of reusable components that can be 
integrated into the system.  The construction of systems from reusable
components is extensively studied.  Many examples for important work on 
the subject can be found in Sifakis' work on component-based construction 
\cite{Sif05} and  de Alfaro and Henzinger's work on ``interface-based 
design'' \cite{dAH05}.  Furthermore, other situations, such as web-service
orchestration \cite{BCGLM03,SPG07}, can be viewed as the
construction of systems from libraries of reusable components.
  
Synthesis is the automated construction of a system from its specification.
The basic idea is simple and appealing: instead of developing a system and
verifying that it adheres to its specification, we would like to have an 
automated procedure that, given a specification, constructs a system that 
is correct by construction.  
The modern approach to temporal synthesis was 
initiated by Pnueli and Rosner, who introduced LTL (linear temporal logic) 
synthesis \cite{PR89a}.  In LTL synthesis, the specification is given in 
LTL and the system constructed is a finite-state transducer modeling a 
reactive system.  In this setting of synthesis it is always assumed that 
the system is ``constructed from scratch'' rather than ``composed'' from 
reusable components.  In \cite{LV09}, we introduced the study of synthesis
from reusable components. We argued there that even when it is 
theoretically possible to design a sub-system from scratch, it is often 
desirable to use reusable components.  The use of reusable components 
allows abstracting away most of the detailed behavior of the sub-system, 
and writing a specification that mentions only the aspects of the 
sub-system relevant for the synthesis of the system at large.

A major concern in the study of synthesis from reusable components is the 
choice of a mathematical model for the components and their composition.
The exact nature of the reusable components in a software library may
differ. The literature, as well as the industry, suggest many different 
types of components; for example, function libraries (for procedural
programming languages) or object libraries (for object-oriented
programming languages).  Indeed, there is no one correct model 
encompassing all possible facets of the problem. The problem of synthesis 
from reusable components is a general problem to which there are as
many facets as there are models for components and types of composition.  
Components can be composed in many ways: synchronously or asynchronously, 
using different types of communications, and the like \cite{Sif05}.

As a basic model for a component, following \cite{LV09}, we abstract away 
the precise details of the component and model a component as a 
{\em transducer}, i.e., a finite-state machine with
outputs. Transducers constitute a canonical model for reactive
components, abstracting away internal architecture and focusing on
modeling input/output behavior.  In \cite{LV09}, two models of 
composition were studied. In \emph{data-flow} composition
the output of one component is fed as input to another component. 
The synthesis problem for data-flow composition was shown to be 
undecidable.  In \emph{control-flow} composition
control is held by a single component at every point in time; the
composition of components amounts to deciding how control is
passed between components, by setting which component receives control
when another component relinquishes it. Control-flow is motivated by
software (and web services) in which a single function is in control
at every point during the execution. In \cite{LV09} we focused on 
``goto'' control flow, and proved that LTL synthesis in that setting is 
2EXPTIME-complete.

In this paper we extend that work and study a composition
notion that relates to ``call and return'' control structure. 
``Call and return'' control flow is very natural for both software and
web services.  An online store, for example, may ``call'' the PayPal web
service, which receives control of the interaction with the user until
it returns the control to the online store.  To allow for 
``call and return'' control-flow structure, we define  
a recursive component to be a transducer in which some of the
states are designated as exit states.  The exist states are partitioned 
into call states, and return states.  Intuitively, a recursive component 
receives control when entering its initial state and relinquishes control 
when entering an exit state.  When a call state is entered, the control is
transferred from the component in control to the component that is
being called by the component in control. When a return state is entered, 
the control is transferred from the component in control to the component 
that called it (i.e., control is returned).  To model return values, 
each transducer has several return states. Each return state is associated
with a re-entry state.  Thus, each transducer has a single entry state, 
several re-entry states, several return states, and several call states.
Composing recursive components amounts to matching call states with
entry states and return states with re-entry states.\footnote{
It is possible to consider more complex models, for example, models in
which there are several call values. The techniques presented here
can be extended to deal with such models. 
} 

Dealing with ``call and return'' control flow poses two distinct
conceptual difficulties. The first is the technical difficulty of
dealing with a ``call and return'' system that has a pushdown store. 
When adapting the techniques of \cite{LV09}, a run is no longer a path
in a control-flow tree, but rather a traversal in a composition tree, in 
which a return corresponds to climbing up the tree. To deal with this 
difficulty we employ techniques used with 2-way automata \cite{PV01}.  
A second difficulty has to do with the specification language.
``Call and return'' control-flow requires a richer specification
language than LTL \cite{AEM04,AABEIL08}.  For example, one might like to 
specify that one function is only called when another function is in the 
caller's stack; or that some property holds for the local computations of 
some function.  In recent years an elegant theory of these issues was 
developed, encompassing suitable specification formalisms, as well as 
semantic, automata-theoretic, and algorithmic issues 
\cite{AEM04,AABEIL08,AM09}.  Here we use the specification language  
\emph{nested-words temporal logic} (NWTL) \cite{AABEIL08}, 
and the automata-theoretic tool of \emph{nested words \buchi automata} 
(NWBA) \cite{AABEIL08,AM09}.

We define here and study the NWTL recursive-library-component
realizability and synthesis problems.  We show that the complexity of
the problem is 2EXPTIME-complete (like standard synthesis and
synthesis of ``goto'' components) and provide a 2EXPTIME algorithm for
the problem. We use the composition-tree technique of \cite{LV09}, in
which a composition is described as an infinite tree. The challenge
here is that we need to find nested words in classical trees. While
the connection between nested words and trees has been studied
elsewhere, cf.~\cite{Alu07}, our work here is the first to combine
nested-word automata with the classical tree-automata framework for
temporal synthesis, using techniques developed for two-way automata
\cite{PV01,Var98}. 

\section{Preliminaries}

\noindent
{\sf Transducers}:
A {\em transducer}
is a deterministic automaton with outputs;
$\T = \tuple{ \Sigma_I, \Sigma_O, Q, q_0, \delta, F, L}$, 
where: $\Sigma_I$ is a finite input alphabet, 
$\Sigma_O$ is a finite output alphabet,
$Q$ is a set of states,
$q_0\in Q$ is an initial state,
$\delta:Q\times\Sigma_I \to Q$ is a transition function, 
$F$ is a set of final states,
and $L:Q\to\Sigma_O$ is an output function labeling states with
output letters.  For a transducer $\T$ and an input word 
$w = w_1 w_2 \ldots w_n \in \Sigma^n_I$, a {\em run}, or a 
{\em computation} of $\T$ on $w$ is a sequence of states 
$r = r_0, r_1, \ldots r_n \in Q^n$ such that  
$r_0 = q_0$ and for every $i\in [n]$ we have 
$r_i = \delta(r_{i-1},w_i)$. 

For a transducer $\T$, we define $\delta^*: \Sigma_I^* \rightarrow Q$ in
the following way: $\delta^*(\varepsilon)= q_0$,
and for $w\in \Sigma_I^*$ and $\sigma \in \Sigma_I$, we have 
$\delta^*(w \cdot \sigma)=\delta(\delta^*(w),\sigma)$.
A $\Sigma_O$-labeled $\Sigma_I$-tree $\tuple{\Sigma_I^*,\tau}$ is 
{\em regular} if there exists a transducer 
${\cal T}=\tuple{\Sigma_I,\Sigma_O,Q, q_0,\delta,L}$
such that for every $w\in \Sigma_I^*$, we have $\tau(w)=L(\delta^*(w))$.
A transducer $\T$ outputs a letter for every input letter it
reads. Therefore, for an input word $w_I\in\Sigma_I^\infty$,
the transducer $\T$ induces a word 
$w\in (\Sigma_I \times \Sigma_O)^\infty$ 
that combines the input and output of $\T$. The 
\emph{maximal computations} of $\T$ are those that exit at a final 
state in $F$ or are of length~$\omega$.

\noindent
{\sf Nested Words, NWTL and NWBA}:
When considering a run in the ``call and return'' control-flow model, 
the run structure should reflect both the linear order of
the execution and the matching between calls and their
corresponding returns. For example, when a programmer uses a debugger
to simulate a run, and the next command to be executed is a call, 
there are two natural meanings to ``simulate next command'':
first, it is possible to execute the next machine command to be
executed (i.e. jump into the called procedure). In debugger
terminology this is ``step into'', and this meaning reflects the
linear order of machine commands being executed. On the other hand,   
it is possible to simulate the entire computation of the procedure
being called, i.e. every machine command from the call to its
corresponding return. In compiler terminology this is ``step over'',
and this meaning reflects the matching between calls and their returns.
Thus, the structure of a run, with the matching between calls and
returns, is richer then the sequence of commands that reflects only
the linear order.
Relating to this richer structure is crucial for reasoning about
recursive systems, and it should be reflected in the mathematical
model of a run, in the formalism by which formal claims on runs are
made, i.e., in the specification formalism.

A run in a ``call and return'' model is a sequence
of configurations, or {\em a word}, together with a matching relation that
matches calls and their corresponding returns. 
The matching relation is {\em nested}, i.e. constrained to ensure that
a return to an inner call appears before the return to an outer call. 
A formal definition appears below. The model of the run consists of
both the word (encoding the linear order) and the matching relation.
A word with nested matching is a {\em nested word} \cite{AM09}.
At the specification level, it should be possible to make formal claims 
regarding system that refer to the ``call and return'' structure
\cite{AEM04,AABEIL08}.
For example: one may want to argue about the value of some memory
location as long as a function is in scope (i.e. during the subsequence
of the computation between the call to the function and its
corresponding return).  Alternatively one may
want to argue about the values of some local values whenever some
function is in control (that may correspond to several continuous
subsequences of commands).  
Another example is arguing about the call stack whenever some function
is in control (such as ``whenever f is in control either g or h are on
the call stack'').  
Several specification formalisms were suggested to reason about ``call
and return'' computations \cite{AEM04,AABEIL08,AM09}. Here we use
\emph{Nested Words Temporal Logic},
(NWTL) \cite{AABEIL08}, which is both expressive and natural to use. 
Finally, to reason about nested words, we use \emph{nested words \buchi 
automata} (NWBA), which are a special type of automata that run on
nested words  \cite{AABEIL08,AM09}.  Intuitively, in a standard infinite 
word, each letter has a single successor
letter.  Therefore, automata on standard words can be seen as being in some
state $q$, reading a letter $\sigma$ and ``sending'' the next state $q'$
to the successor letter $\sigma'$. 
In a nested word, however, a letter $\sigma$ might have two ``natural
successors''. First the letter $\sigma'$ following it in the linear 
sequence of execution, and second another letter $\sigma''$ that is 
matched to it by the ``call and return'' matching. 
A NWBA not only ``sends''
a state to the successor letter $\sigma$, but also ``sends'' some
information, named {\em hierarchical symbol}, to the
matched letter $\sigma''$. The transition relation takes into account
both the state and the hierarchical symbols. A formal definition of 
NWBA's is presented below.

We proceed with the formal definitions of nested words, the logic NWTL
for nested words, and the automata NWBA running on nested words.
The material presented below is taken from \cite{AABEIL08}, which we
recommend for a reader who is not familiar with nested words, their
logic, or their automata.

A {\em matching} on $\N$ or an interval $[1, n]$ of $\N$ is a binary
relation  $\mu$ and two unary relations $\call$ and $\ret$, satisfying
the following: 
(1) if $\mu(i, j)$ holds then $\call(i)$ and $\ret(j)$ hold and $i < j$; 
(2) if $\mu(i, j)$ and $\mu(i, j')$ hold then $j = j'$ and if 
    $\mu(i, j)$ and $\mu(i', j)$ hold then $i = i'$;  
(3) if $i \le j$ and $\call(i)$ and $\ret(j)$ hold, then there exists 
    $i \le k \le j$ such that either $\mu(i, k)$ or $\mu(k, j)$.
Let $\Sigma$ be a finite alphabet. A finite nested word of length $n$
over $\Sigma$ is a tuple $\nw{w} = \tuple{w, \mu, \call, \ret}$, where 
$w = a_1 \ldots a_n \in \Sigma^*$, and $\tuple{\mu, \call, \ret}$ is a
matching on $[1, n]$. 
A nested $\omega$-word is a tuple $\nw{w} = \tuple{w, \mu, \call, \ret}$,
where $w = a_1 \ldots \in \Sigma^\omega$, and $\tuple{\mu, \call, \ret}$ 
is a matching on $\N$.
We say that a position $i$ in a nested word $\nw{w}$ is a call position
if $\call(i)$ holds; a return position if $\ret(i)$ holds; and an
internal position if it is neither a call nor a return. 
If $\mu(i, j)$ holds, we say that $i$ is the matching call of $j$, and
$j$ is the matching return of $i$, and write $c(j) = i$ and $r(i) = j$. 
Calls without matching returns are pending calls.
For a nested word $\nw{w}$, and two positions $i, j$ of $\nw{w}$, we
denote by $\nw{w}[i, j]$ the substructure of $\nw{w}$ (i.e., a finite
nested word) induced by positions $l$ such that $i \le l \le j$. 
If $j < i$ we assume that $\nw{w}[i, j]$ is the empty nested word. 
For nested $\omega$-words $\nw{w}$, we let $\nw{w}[i,\infty]$ denote 
the substructure induced by positions $l \ge i$.
When this is clear from the context, we do not distinguish references
to positions in subwords $\nw{w}[i, j]$ and $\nw{w}$ itself, e.g., we
shall often write $ \tuple{\nw{w}[i, j], i} \models \varphi$ to mean
that $\varphi$ is true at the first position of $\nw{w}[i, j]$.


{\em Nested words temporal logic} ({\em NWTL}) is a
specification formalism suitable for ``call and return'' computations
\cite{AABEIL08}.
First we define a summary path between positions $i<j$ in a
nested word $\nw{w}$. Intuitively, a summary path skips from calls to
returns on the way from $i$ to $j$. The {\em summary path} between
positions $i<j$ in a nested word $\nw{w}$ is a sequence 
$i = i_0 < i_1 < \ldots < i_k = j$ such that for all $p < k$ we have 
$i_{p+1} = r(i_p)$ if $i_p$ is a matched call and $j
\ge r(i_p)$; or $i_{p+1} = i_p +1$ otherwise.
Next, we define NWTL syntax.
For an alphabet $\Sigma$, the letters of $\Sigma$, $\top$ (standing
for true), $\call$, and $\ret$ are NWTL formulas.  
NWTL has the operators: not $\lnot$, or $\lor$, next $\next$, abstract
next (that skips from a call to its return) $\munext$, previous $\prev$,   
abstract previous $\muprev$, summary until (to be defined below) 
$\sumuntil$, and summary since $\sumsince$. 
For NWTL formulas $\varphi_1, \varphi_2$ the following are NWTL formulas:
$\lnot \varphi_1 |
\varphi_1 \lor \varphi_2 | \next \varphi_1 | \munext \varphi_1 | 
\prev \varphi_1 | \muprev \varphi_1 | \varphi_1 \sumuntil \varphi_2 | 
\varphi_1 \sumsince \varphi_2$.
We proceed to define NWTL semantics.
Let $w = w_1 \ldots w_n$ or $w_1\dots$ be a finite or infinite word
over $\Sigma$.  
Let $\nw{w} = \tuple{ w, \call, \ret, \mu}$, and $i\ge 1$ be a number
bounded by the length of $w$.
Every nested word satisfies $\top$, in particular $(\nw{w},i)\models \top$.
For a letter $\sigma\in \Sigma$ we have $(\nw{w},i)\models \sigma$ iff 
$\sigma = w_i$. (This is can be extended to alphabets of the type  
$\Sigma = 2^{AP}$,  that consists of
sets of atomic propositions, in the standard way, i.e.,
$(\nw{w},i)\models p$ iff $p\in w_i$). 
Boolean operators semantics is standard $(\nw{w},i)\models \lnot \varphi$
iff $(\nw{w},i)\not \models \varphi$; and 
$(\nw{w},i)\models \varphi_1 \lor \varphi_2$ iff 
$(\nw{w},i)\models \varphi_1$ or $(\nw{w},i)\models  \varphi_2$.
We also have $(\nw{w},i)\models \next \varphi$ iff 
$(\nw{w},i+1)\models \varphi$ and $(\nw{w},i)\models \prev \varphi$ iff 
$(\nw{w},i-1)\models \varphi$. We have $(\nw{w},i) \models \call$ iff
$i$ is a call, and $(\nw{w},i) \models \ret$ iff $i$ is a return.
We have $(\nw{w},i)\models \munext \varphi$ iff $i$ is a call with a
matching return $j$ (i.e., $\mu(i, j)$ holds) and 
$(\nw{w}, j) \models \varphi$. Similarly, $(\nw{w},i)\models \muprev
\varphi$ iff $i$ is a return with a matching call $j$ (i.e.,
$\mu(j,i)$ holds) and $(\nw{w}, j) \models \varphi$.
For summary until we have 
$(\nw{w},i)\models \varphi_1 \sumuntil \varphi_2$ iff there exists a
$j\ge i$ for which $(\nw{w},j)\models \varphi_2$, and for the summary
path $i = i_0 < i_1 < \ldots < i_k = j$ between $i$ and $j$ we have 
for every $p< k$ that $(\nw{w},i_p)\models \varphi_1$. Similarly, 
$(\nw{w},i)\models \varphi_1 \sumsince \varphi_2$ iff there exists a
position $j < i$ for which $(\nw{w},j)\models \varphi_2$ and for the
summary path  $j = i_0 < i_1 < \ldots < i_k = i$ between $j$ and $i$
we have for every $p\in [k]$ that $(\nw{w},i_p)\models \varphi_1$. 

Rather than use NWTL directly, we use here \emph{nested-word
\buchi automata} (NWBA), which are known to be at least
as expressive as NWTL; in fact, there is an exponential translation from
NWTL to NWBA \cite{AABEIL08}, analogous to the exponential
translation of linear temporal logic to \buchi automata \cite{VW94}.
A {\em nondeterministic nested word \buchi automaton} ({\em NWBA})
is a tuple
$\A=\tuple{\Sigma,Q,Q_0,Q_f,P,P_0,P_f,\delta_c,\delta_i,\delta_r}$,
consisting of a finite alphabet $\Sigma$,
finite set $Q$ of states, a set $Q_0\subseteq Q$ of
initial states, a set $Q_f\subseteq Q$ of accepting states,
a finite set $P$ of hierarchical symbols, a set $P_0\subseteq P$ of initial
hierarchical symbols, a set $P_f\subseteq P$ of final hierarchical
symbols, a call-transition relation  
$\delta_c \subseteq Q \times \Sigma \times Q \times P$, an internal
transition relation $\delta_i \subseteq Q \times \Sigma \times Q$, and
a return-transition relation 
$\delta_r \subseteq Q \times P \times \Sigma \times Q$. 
The automaton $\A$ starts in an initial state and reads the nested word
from left to right. 
A run $r$ of the automaton $\A$ over a nested word 
$\nw{w} = \tuple{ a_1 a_2 \ldots, \mu, \call, \ret}$ 
is a sequence $q_0, q_1, \ldots$ of states, 
and a sequence $p_{i_1}, p_{i_2}, \ldots$ of hierarchical symbols, 
corresponding to the call positions $i_1, i_2, \ldots$,
such that $q_0\in Q_0$, and for each position $i$, if $i$ is a call then 
$\tuple{q_{i - 1}, a_i, q_i, p_i} \in \delta_c$; 
if $i$ is internal, then $\tuple{q_{i - 1}, a_i, q_i} \in \delta_i$; 
if $i$ is a return such that $\mu(j, i)$, then 
$\tuple{q_{i - 1}, p_j , a_i, q_i} \in \delta_r$; 
and if $i$ is an unmatched return then
$\tuple{q_{i - 1}, p, a_i, q_i} \in \delta_r$ for some $p\in P_0$. 
Intuitively, in a run~$r$, the hierarchical symbol
associated with a matched return position~$i$, is the hierarchical
symbol $p_j$, associated with the call position~$j$ that is matched to~$i$.
The run $r$ is accepting if 
(1) for all pending calls $i$, $p_i\in P_f$ , and 
(2) if $\nw{w}$ is a finite word of length $l$ then the final state
$q_l$ is accepting (i.e.,  $q_l \in Q_f$),
and if $\nw{w}$ is an $\omega$-word then for infinitely many
positions $i$, we have $q_i\in Q_f$. 
The automaton $\A$ accepts the nested word $\nw{w}$ if it has an
accepting run over $\nw{w}$.

\section{The computational model}\label{sec: RLCs}

\noindent
{\sf Recursive Components and their composition}:
To reason about recursive components one has to choose a mathematical
model for components. The choice of model has to balance the need for
a rich modeling formalism, for which computationally powerful
models are preferred, and the need to avoid the pitfall of
undecidability, for which simpler models are preferred. 

A successful sweet spot in this trade off is the computational model
of finite-state transducers, i.e. finite-state machines with output.  
A common approach to reasoning about real world systems, is
abstracting away the data-intensive aspects of the computation and
model the control aspects of the computation by a finite-state
transducer. Using this approach, the transducers model is rich enough
to model real world industrial designs \cite{HOL97,BCLR04}.
For that reason, transducers are widely used in both theory 
\cite{VW94,PR89a,ABEGRY05} and practice \cite{HOL97,BCLR04}, 
and are prime candidates as a model for ``call and return'' components.

To model ``call and return'' control-flow by transducers, 
we introduce a small variation on the basic transducer model. 
Essentially, we use transducers in which some states are ``call
states'', where a transition to one of these states stands for a call to
another component; some states are ``return'' states, where a
transition to one of these states stands for a return to 
the component that called this component; and some states are
re-entry states, i.e., states to which the component enters upon
return from a call to another component. Similar models can be found in 
\cite{ABEGRY05}.
Different return values, are modeled here by having different
re-entry states. The model is somewhat simplified in the sense that a 
return is not constrained in terms of the call state through which the 
call was made. In software, for example, the return is constrained to the 
instruction following the call instruction (although several return values
may be permitted). Nevertheless, the model is rich enough to deal with
the essence of ``calls and returns'', and the techniques we present can be
used to deal with richer models (e.g. each call may be associated with a
mapping between return states and re-entry states capturing constrained 
returns as above). We chose this simpler model as it allows for simpler 
notation and clearer presentation of the underlying ideas.   

To simplify the notation, we fix a number $n_C$ and assume
every component in the library has exactly $n_C$ calls. Similarly, 
we fix a number $n_R$ and assume every component in the library has
exactly $n_R$ return points, as well as exactly $n_R$ points to which
the control is passed upon return.

A {\em Recursive Library Component} (\emph{RLC}) is a finite
transducer with call, return and re-entry states.
Formally, an RLC is a tuple 
$M =$ $\langle \Sigma_I,$ $\Sigma_O,$ $S,$ $s_0,$ $s_e^R,$ $S_C,$ $S_R,$ $\delta, L\rangle$
where:
(1) $\Sigma_I$ and $\Sigma_O$ are finite input and output alphabets.
(2) $S$ is a finite set of states.
(3) $s_0\in S$ is an initial state. When called by another component,
the component $M$ enters $s_0$.
(4) $s_e^R\subseteq S$ is a set of re-entry states.
When the control returns from a call to another component, $M$
enters one of the re-entry states in $s_e^R$. 
We denote $s_e^R = \{ s_e^1, \ldots, s_e^{n_R} \}$  
(5) $S_C\subseteq S$ is a set of call states. When $M$ enters a state in 
$S_C$, another component $M'$ is called, and the control is transferred to
$M'$ until control is returned. 
We denote $S_C = \{ s_C^1, \ldots, s_C^{n_C} \}$  
(6) $S_R\subseteq S$ is a set of return states. 
When $M$ enters a return state, the control is returned to the
component that called $M$. 
We denote $S_R = \{ s_R^1, \ldots, s_R^{n_R} \}$.
When the $i$-th return state, i.e. $s_R^i$, is entered, control is
returned to the caller component $M'$, which is entered at his $i$-th
re-entry state (i.e., $M'$'s state $s_e^i$).   
(7) $\delta: S\times \Sigma_I \to S$ is a transition function.
(8)  $L:S\to \Sigma_O$ is an output function, labeling each state by an
output symbol.

The setting we consider is the one in which we are given a library
$\L = \{ C_1, \hide{C_2,} \ldots, C_l \}$ of RLC components.
A \emph{composition} over $\L$ is a tuple 
$\tuple{(1, C_1, f_1),(2, C_2, f_2),\ldots,(k, C_k, f_k)}$ 
of~$k$ composition elements, in which each composition element is a triple
composed of an index $i$, an RLC $C_i\in\L$, and an interface function
$f_i:S_C \to [k]$ that maps each of $C_i$'s call states into the 
composition element that is called upon entry to the call state.  
Note that the same RLC can be instantiated in different elements of the
composition, but with different interface functions, and the size
of the composition is a priori unbounded.%
\footnote{
If we had bounded the number of elements in a composition, then
the number of ways in which these elements can be composed would have been
finite and the search for a composition that satisfies some
specification would have turned into a combinatorial search,
analogously, for example, to \emph{bounded synthesis} \cite{KLVY11}.
} 
While we consider here only finite compositions, we could have
considered, in principle, also infinite compositions. As we shall
see, for NWBA specifications, finite compositions are sufficient.

A run of the system begins in state $s_0$ of $C_1$.
When the run is in a state of the component $C$ we say that the
component $C$ is in control. For example, a run begins when the
component $C_1$ is in control.  
For every $i\le k$, as long as a component $C_i$ is in control, the 
system behaves as $C_i$ until an exit state (i.e. a
call state or a return state) is entered. 
If a call state $s_C^j\in S_C$ of $C_i$ is entered then the component
$C_{f_i(j)}$ is called. That is, the control is passed to the
$f_i(j)$-th component in the composition. The run proceeds
from the start state of $C_{f_i(j)}$. 
If a return state $s_R^j\in S_R$ of $C_i$ is entered (when
$C_i$ is in control), then $C_i$ returns the control to the
component that called $C_i$. If, for example, $C_i$ was called by
$C_j$ then when $s_R^m$ is entered, the run proceeds from the
re-entry state $s_e^m$ of $C_j$. 
We now define the composition formally.

Formally, a composition
$C=\langle (1, C_1, f_1),(2, C_2, f_2),\ldots,(k, C_k, f_k) \rangle$,
where\\\
$C_i = \langle \Sigma_I, \Sigma_O, S[i], s_0[i], s_e^R[i],
S_C[i], S_R[i], \delta[i], L[i] \rangle$,
induces a (possibly infinite) transducer \\
$M = \{\Sigma_I, S_O, s_0^M, \delta^M, L^M \}$, where:
\begin{compactenum}
\item
  The input alphabet is $\Sigma_I$ and the output alphabet is $\Sigma_O$.
\item
  The states of $M$ are 
  finite sequences of the form $\tuple{ i_1, i_2, \ldots, i_m, s }$, where
  for every $j\le m$ we have $i_j\in [k]$, and the final
  element is a state $s\in S[i_m]$ of $C_{i_m}$. Intuitively, such a
  state stands 
  for the computation being in the state $s$ of the RLC $C_{i_m}$,
  where the computation call stack is $i_1, i_2, \ldots, i_m$.
  The initial state of $M$ is $\tuple{1,s_0[1]}$ where $s_0[1]$ is the
  initial state of $C_1$. 
  Formally, $S_M  = [k]^* \cdot (\bigcup_{i\in [k]} i\cdot S[i])$.

\item
  Next, we define the transition function $\delta^M$. 
  Let $v = \tuple{ i_1, i_2,\ldots , i_m , s }$ be a state of $M$.
  Then, $\delta^M(v, \sigma) = v'$ if one of the following holds:
  \begin{compactenum}
  \item
    {\bf internal transition}:
    If $\delta[i_m](s, \sigma) = s'$ for some state 
    $s'\in S[i_m]\setminus (S_C[i_m]\cup S_R[i_m])$ of $C_{i_m}$, then
    $v' = \tuple{ i_1, \ldots , i_r , s' }$, where
  \item
    {\bf call transition}:
    If $ \delta[i_m](s, \sigma) = s'$ where $s'\in S_C[i_m]$ is  
    the $j$-th call state of $C_{i_m}$ (i.e., $s' = s_C^j[i_m]$), then
    $v' = \tuple{ i_1, \ldots , i_m , f_{i_m}(j), s_0[f_{i_m}(j)]}$,
  \item
    {\bf return transition}:
    If $\delta[i_m](s, \sigma) = s'$ where $s' \in S_R[i_m]$, 
    is the $j$-th return state of $C_{i_m}$ (i.e., $s' = s_R^j[i_m]$),
    then $v' = \tuple{ i_1, \ldots , i_{m-1} , s_e^j[i_{m-1}] }$.
  \end{compactenum}
\item
The final state set $F^M=\langle 1,S_R[1]\rangle$. Intuitively, the
computation terminates when the first component returns.
\item
The output function $L^m$ is defined by
$L^m(\tuple{i_1, \ldots, i_m,s}) = L[i_m](s)$.  
\end{compactenum}

For an input word 
$w^I = w^I_0,w^I_1\ldots \in \Sigma_I^\infty$, 
the transducer $M$ induces an output word  
$w^O = w^O_0, w^O_1, \ldots \in \Sigma_O^\infty$. 
We denote by  $w = (w^I_0, w^O_0),(w^I_1, w^O_1)\ldots$
the combined input-output sequence induced by $w^I$.
Furthermore, on the input word $w^I$, the composition $C$ induces a
nested word $\nw{w} = \tuple{ w, \call, \ret, \mu}$ in which $w$ is
the input-output induced word, $\call$ holds in positions in which a
component made a call, $\ret$ holds in positions in which a component
returned, and $\mu$ maps each call to its return.
We sometime abuse notation and refer to the word $w$ rather than the
nested word $\nw{w}$. Similarly we might refer to a {\em computation}
of, or in, a composition meaning a nested word induced by the
composition. Similarly, we may refer to a {\em computation segment}
meaning a substructure $\nw{w}[i,j]$, for some positions $i,j$, of a
computation.

A composition $C$ {\em realize} an NWTL specification $\varphi$ if
all computations induced by $C$ satisfy $\varphi$.
The {\em recursive-library-components realizability problem} is:
given a library of RLCs $\L = \{ M_j\}_{j=1}^n$ and an NWTL
specification $\varphi$, decide whether there exists a composition of  
components from the library that realize $\varphi$.
The {\em recursive-library-components-synthesis problem} is: given a
library of RLCs $\L = \{ M_j\}_{j=1}^n$ and an NWTL specification
$\varphi$, decide whether $\varphi$ is realizable by a composition of
RLCs from $\L$ and if so, output a composition realizing $\varphi$. 

\noindent
{\sf Composition trees}\label{sec: Composition trees}
Next, we define the notion of a {\em composition tree}, which is the
analog of a control-flow tree in \cite{LV09}.
Fixing a library $\L$ of RLCs, composition trees represent compositions. 
A composition tree is labeled tree $\tau = \tuple{T, \lambda}$, where
$T$, the tree structure, is the set $[n_C]^*$, and $\lambda:T\to \L$ 
is a mapping of the tree vertexes into $\L$.  Every composition 
$C  = \tuple{(1, C_1, f_1),(2, C_2, f_2),\ldots,(k,  C_k, f_k)}$, 
induces an $\L$-labeled composition tree $\tau_C$.
We first show  that $C$ induces a $[k]$-labeled tree that we call 
{\em intermediate tree}.   A labeled tree $\tuple{[n_C]^*,\kappa}$,
where $\kappa:[n_C]^*\to [k]$, is the intermediate mapping induced by
$C$, if $\kappa(\epsilon) =1$, and, for every $v\in [n_C]^*$
and $j\in [n_C]$, we have that $\kappa(v\cdot j)=f_{\kappa(v)}(j)$. 
The {\em composition tree} induced by $C$ is $\tuple{[n_C]^*,\lambda}$
where for every $v\in [n_C]^*$ we have that $\lambda(v)=C_{\kappa(v)}$.
A node $v = i_1 \cdots i_k$ represents a call-stack
configuration. The node's label $\lambda(v)$ is the component in
control, while the labels of the node's successors, i.e.,
$\lambda(v\cdot 1), \ldots, \lambda(v\cdot n_C)$, stand for the 
components that are called if a call state is entered.
Intuitively, the control flow of an actual computation is a represented
by a traversal in a composition tree. The control is first given to
the component labeled by the root. For a node~$v$, a call corresponds
to a descent to a successor (where a call from the $i$-th call state
corresponds to a descent to the $i$-th successor). Similarly, a return 
from a node $v$ corresponds to an ascent to the predecessor of~$v$. 

Thus, a composition induces a composition tree. On the other hand, a
composition tree can be seen as an ``infinite composition'' in which
each node $v$ stands for a composition element in which the component
is the label of $v$, and the interface function $f_v$ maps the call
states to the successors (i.e., for every $v\in [n_C]^*$ and 
$i\in [n_C]$ we have $f_v(i) = v\cdot i$). 
So a composition tree induces an infinite composition. 
We abuse terminology and refer to computations of a composition tree,
where we mean to refer to computations of the induced infinite composition.
Furthermore, in Theorem~\ref{thm: RLC synthesis solution} we show 
how a finite composition can be extracted from a \emph{regular} 
composition tree.  Another abuse of terminology we make is to refer to a 
labeled subtree of a composition tree as a composition tree. 

\section{Recursive-library-components synthesis algorithm}
Our approach to the solution of the RLC synthesis problem, is first to
construct a tree-automaton $\A_b$ that accepts composition trees that
do {\em not} satisfy the specification. Once that is 
achieved, $\A_b$ can be complemented to get an automaton $\A$ which is 
accepts composition trees that \emph{do} satisfy the specification. 
Finally, $\A$'s language can be checked for emptiness
and if not empty, a system can be extracted from a witness (similar to
the algorithm in \cite{LV09}).
Thus, the main ingredient in the solution is the following theorem
(that allows the construction of $\A_b$).  
\begin{theorem}\label{thm: main construction}
Let $\L$ be a library of RLC
components, each with $n_R$ return states, and let $\Aphi$ be a NWBA.
There exists an alternating B\"uchi automaton on trees (ABT) 
$\A$, 
with at most $O(|\Aphi|^2\cdot n_R)$ states, 
whose language is the set of composition trees for which there
exists a computation in the language of $\Aphi$.
\end{theorem}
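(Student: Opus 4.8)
The plan is to build $\A$ as a one-way alternating tree automaton that, while walking the composition tree, guesses both an input word and an accepting run of $\Aphi$ on the computation this input induces, accepting $\tau$ exactly when some such run exists. The organizing observation is that a computation is a traversal of the composition tree: an \emph{internal} step stays at the current node, a \emph{call} from the $j$-th call state descends to the $j$-th child, and a \emph{return} ascends to the parent, re-entering at the matching re-entry state. Thus, while $\A$ is at a node $v$ it reads the component $C=\lambda(v)$ off the label and simulates a run of $C$ in lockstep with the internal transitions of $\Aphi$, guessing input letters; this local simulation is summarized so that between nodes $\A$ needs to remember only data about $\Aphi$, not the internal state of $C$.

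The genuinely hard point is that returns ascend the tree while an alternating tree automaton sends copies only downward. I would eliminate the upward moves by summarizing each call/return detour, in the style of the two-way-to-one-way alternating conversion of \cite{PV01,Var98}. When the local run of $C$ hits a call state and descends to child $v\cdot j$, $\A$ guesses the hierarchical symbol $p$ emitted by the $\delta_c$-transition, the index $m\in[n_R]$ of the return state through which the subtree at $v\cdot j$ will eventually exit, and the state $q''$ that $\Aphi$ reaches just before that return. With this guess the parent continues immediately from re-entry state $s_e^m$ in any state $q'''$ with $\langle q'',p,\cdot,q'''\rangle\in\delta_r$, matching $p$ as the hierarchical symbol of the call; in parallel it sends one copy to $v\cdot j$ carrying the obligation ``start in the guessed entry state and exit through return index $m$ with pre-return state $q''$''. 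Since such an obligation is a triple consisting of an entry state, a return index in $[n_R]$, and an exit state, the persistent state space of $\A$ is $O(|\Aphi|^2\cdot n_R)$, as claimed.

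I would then tune the acceptance condition so that $\A$'s \buchi condition on tree-runs captures $\Aphi$'s \buchi condition on the linear computation. A computation is infinite either through an unbounded internal loop confined to one node, or through an infinite descent whose calls are \emph{pending} (never returned). The first case is an $\omega$-regular property of the product of $C$ with $\Aphi$ and can be checked locally and folded into the transition. For the second, following the NWBA acceptance definition I would require the hierarchical symbol guessed at each pending call to lie in $P_f$, and propagate a flag recording whether an accepting state of $\Aphi$ occurs, so that infinitely many $Q_f$-visits in linear order become infinitely many accepting states along the corresponding infinite tree branch. Unmatched returns at the root draw their hierarchical symbol from $P_0$, and a finite accepting computation is one in which the root component reaches its return (final) states.

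Correctness would be argued in both directions by gluing summaries. From an accepting run of $\A$ one reconstructs a single input word and an accepting run of $\Aphi$ on the induced computation: each child copy certifies that its subtree realizes the promised detour, and splicing these detours into the parents' local runs yields one genuine run on one nested word. Conversely, any computation of $\tau$ in the language of $\Aphi$ supplies consistent summaries, read off an accepting NWBA run, that drive an accepting run of $\A$. I expect the main obstacle to be precisely this interface between the pushdown (up-and-down) traversal and the \buchi condition: transmitting through each summary exactly enough information---the return index, the two boundary states of $\Aphi$, the accepting-state flag, and the membership of hierarchical symbols in $P_f$ or $P_0$---so that acceptance is preserved in both directions while the remembered state stays quadratic in $|\Aphi|$ and linear in $n_R$.
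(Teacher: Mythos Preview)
Your proposal is essentially the paper's construction: summarize each call/return detour by a state in $Q\times Q\times[n_R]\times\{0,1\}$ (entry state, exit state, return index, ``saw $Q_f$'' flag), use $Q\times\{0,1\}$ for non-returning suffixes, and make $Q\times\{1\}$ accepting so that infinitely many pending calls must infinitely often carry the flag; the paper implements your ``local simulation folded into the transition'' via a configuration graph $G_C$ whose vertices are pairs/triples of component state, $\Aphi$-state, and hierarchical symbol, with \emph{component edges} for reachability inside $C$ and \emph{call edges} for summarized detours.

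One point to sharpen: your dichotomy ``infinite internal loop confined to one node'' versus ``infinite chain of pending calls'' omits the case where the top-level control stays forever at one node $v$ but makes \emph{infinitely many matched} calls from $v$ (so the run tree would need infinitely many children at that node). The paper handles this explicitly as a third disjunct $\delta_3$: it looks for a $\rho$-path in $G_C$ (a path that closes a cycle, with one designated edge on the cycle required to be accepting), which lets a single transition spawn only the finitely many summary obligations along a bounded-length cycle representative rather than one per call. Your phrase ``$\omega$-regular property of the product of $C$ with $\Aphi$'' is right only if that product already includes the summarized call edges; make that explicit, and your sketch coincides with the paper's proof.
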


Our main result follows from Theorem~\ref{thm: main construction}.
\begin{theorem}\label{thm: RLC synthesis solution}
The recursive library components realizability problem and 
the recursive library components synthesis problem are
2EXPTIME-complete.
\end{theorem}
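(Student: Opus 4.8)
The plan is to prove matching upper and lower bounds. For the upper bound I would follow the automata-theoretic pipeline sketched at the start of Section~4, instantiating Theorem~\ref{thm: main construction} with an NWBA for the \emph{negated} specification; for the lower bound I would reduce from a problem already known to be 2EXPTIME-hard, namely LTL synthesis.

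For the upper bound, first translate $\lnot\varphi$ into an NWBA $\Aphi$ accepting exactly the computations that violate $\varphi$; by the exponential NWTL-to-NWBA translation of \cite{AABEIL08} we have $|\Aphi| = 2^{O(|\varphi|)}$. Applying Theorem~\ref{thm: main construction} to $\L$ and this $\Aphi$ yields an ABT $\A_b$ with $O(|\Aphi|^2\cdot n_R) = 2^{O(|\varphi|)}$ states that accepts exactly those composition trees admitting a computation in the language of $\Aphi$, i.e., the composition trees that do \emph{not} realize $\varphi$. Dualizing $\A_b$ (complementing its transition function and turning the \buchi condition into a co-\buchi one) gives an alternating co-\buchi tree automaton $\A$ of the same size whose language is the set of composition trees that realize $\varphi$. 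To test $\A$ for emptiness I would convert it to an equivalent nondeterministic tree automaton, incurring a single exponential blow-up to $2^{2^{O(|\varphi|)}}$ states, and then check emptiness of the resulting automaton in time polynomial in its size. The whole procedure therefore runs in doubly exponential time, establishing membership in 2EXPTIME for realizability.

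For synthesis I would exploit the fact that a nonempty tree automaton always accepts a \emph{regular} tree, for which a finite-state representation can be computed. Such a witness is a regular composition tree, and from its finite-state representation one extracts a finite composition: take the finitely many memory states of the representation as the composition elements $(i,C_i,f_i)$, reading the component $C_i$ from the $\L$-label attached to a state and the interface function $f_i$ from the successor map of the representation on the $n_C$ directions. Unravelling this finite composition reproduces the regular composition tree, so the extracted composition realizes $\varphi$; this simultaneously discharges the earlier claim that, for NWBA specifications, finite compositions suffice.

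The lower bound follows by reduction from LTL realizability~\cite{PR89a}. Since NWTL subsumes LTL and the RLC model subsumes the flat/``goto'' setting --- a goto being encoded as a non-returning call, with the LTL formula read as an NWTL formula that refers only to the linear order and ignores the matching relation --- I would build, from a given LTL instance, a library whose compositions range over all finite-memory reactive strategies, so that a realizing composition corresponds to a realizing transducer. As this reproduces the 2EXPTIME-hard LTL synthesis problem, hardness transfers. I expect the main obstacle to be the two bookkeeping-heavy but conceptually standard automata steps --- the dualization of the \buchi ABT and its conversion to a nondeterministic automaton --- where one must verify that only a single exponential is paid at each stage so that the total stays doubly exponential; by contrast, the genuinely new point, extracting a \emph{finite} composition from the regular emptiness witness, is comparatively direct once the regular-tree structure is in hand.
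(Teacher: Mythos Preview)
Your proposal is correct and follows essentially the same route as the paper: negate $\varphi$, convert to an NWBA via \cite{AABEIL08}, apply Theorem~\ref{thm: main construction} to obtain an ABT over composition trees, dualize to a co-\buchi automaton, check nonemptiness (the paper invokes \cite{KV05c} directly rather than spelling out the conversion to a nondeterministic automaton), and extract a finite composition from a regular witness. The only minor divergence is the lower bound: the paper simply cites the 2EXPTIME-hardness of \emph{goto}-library synthesis from \cite{LV09} and observes that a goto is a call without a return and LTL is a fragment of NWTL, whereas you propose to redo the library-from-LTL-instance construction yourself starting from \cite{PR89a}; this is the same reduction, just with one more step left to carry out.
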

\begin{proof}
The algorithm proceeds as follows. We first translate $\neg\varphi$
into an equivalent NWBA $\A_{\neg\varphi}$, with an exponential blow-up
\cite{AABEIL08}.  We then construct an ABT $\A$ for $\A_{\neg\varphi}$
according to Theorem~\ref{thm: main construction}, dualize $\A$ into an 
an alternating co-B\"uchi automaton on trees (ACT) $\A'$, and check 
$\A'$'s language for nonemptiness as in \cite{KV05c}.
If the specification is realizable, then the language of $\A'$
contains a regular composition tree, for which all computations satisfy
$\varphi$. Otherwise, the language of $\A'$ is empty. 
Given a regular composition tree $\tuple{[n_C]^*, \tau}$, it is induced 
by a transducer (without final states)
$T = \tuple{ [n_C], \L, Q, q_0, \delta, L}$, such that  
for every $w\in [n_C]^*$, we have $\tau(w)=L(\delta^*(w))$.
We assume, w.l.o.g. that the set $Q$ is the set $[|Q|]$ of natural
numbers, and that $q_0$ is the number 1.
A finite composition can now be constructed in the following way:
For every state $q\in Q$ there is a composition element 
$\tuple{q,C_q,f_q}$ in which $C_q = L(q)$, and for every $j\in [n_C]$
we have $f_i(j) = \delta(i,j)$.
It can then be shown that the constructed composition induces the same
infinite-state transducer as the regular composition tree (up to
component names) and therefore satisfies $\varphi$.

As for complexity, $\A$'s number of states is quadratic in $|\Aphi|$
and linear in $n$ and $b$ (upper bounding $n_R$ by $b$). 
(Note that quadratic in $|\Aphi|$ is exponential in $|\varphi|$).
The complementation of $\A$ into $\A'$ incurs no complexity cost. 
Finally, checking $\A'$ for emptiness is exponential in its number of
states. This provides a 2EXPTIME upper bound. 
For a lower bound, note that a ``goto'' can be seen as a call without a
return and LTL is a fragment of NWTL. Thus, a 2EXPTIME lower bound follows 
from the 2EXPTIME lower bound in \cite{LV09}.
\end{proof}

We now prove Theorem~\ref{thm: main construction}.
There are two sources of difficilty in the construction. First, we
have to handle here call-and-return computations in composition
trees. While computations in composition trees in \cite{LV09} always
go down the tree, computations here go up and down the tree. Second,
here we have to emulate NWBA on the computations of composition
trees, but we want to end up with standard tree automata, rather
then nested-word automata.

Intuitively, given a computation tree as input, our construction would
guess a computation of the input tree, in the language of $\Aphi$,
together with an accepting run of $\Aphi$, on the guessed computation.
As mentioned in in the discussion of Composition trees, however, a
computation of the composed system corresponds to a traversal in the
composition tree. Therefore, to guess the computation, i.e.,
the traversal in the input tree, and the computation of $\Aphi$ on it,
we employ 2-way-automata techniques.

Let $\Aphi=
\tuple{Q,Q_0,Q_f , P, P_0, P_f , \delta_c, \delta_i, \delta_r}$.
The construction of $\A$ is quite technical.
Below we present the construction of $\A$, where the
introduction of each part begins in an informal/intuitive discussion
and ends in a formal definition. 

\noindent
{\sf The states of $\A$}:
Intuitively, $\A$ reads an input tree $\tau$ and guesses an accepting run 
of $\Aphi$ on a computation of that input tree. The difficulty is that a 
computation cannot be guessed node by node, since when a computation 
enters a call node, we need to consider the return to that node.
Thus, when reading a node $v$ labeled by component $C$, the ABT $\A$
guesses an {\em augmented computation} of $C$ in which there are
{\em call transitions} from call states to re-entry states, and a
corresponding {\em augmented run} of $\Aphi$ (in which $\Aphi$'s state 
changes at the end of a call transition of $C$).
Of course, when $\A$ guesses a call transition it should also verify
that there exists a computation segment and a run segment of $\Aphi$, 
corresponding to that call transition.
To verify a call transition from $s_C^j$ to $s_R^k$, the ABT $\A$
sends a copy of itself, in an appropriate state, to $j$-child son of the
component being read.

In general, $\A$ has two types of states:  
states for verifying call transition (i.e. computations segments
between a call and its return), and states for
verifying the existence of computation suffixes that do not return. 
An example of a computation suffix that does not return is a
computation that follows a pending call.
States of the first type verify the feasibility of a computation
segment, and  there exists such a state every triple 
$\tuple{q,q',i}\in Q^2\times [n_R]$. 
If $\A$ reads a tree node $v$ in state $\tuple{q,q',i}$
it has to verify the existence of a computation in which a call was
made to $v$'s component when $\Aphi$ was in state $q$, and the first
return from $v$'s component is from the $i$-th return state $s_R^i$,
when $\Aphi$ is in state $q'$. 
States of the second type exist for every state $q\in Q$. 
If $\A$ reads a tree node~$v$ in state~$q$
it has to verify the existence of a computation suffix in
which a call was made to $v$'s component when $\Aphi$ was in state
$q$, and $\Aphi^q$ has an accepting run on that suffix. 
The initial state of $\A$ is of the second type: 
the initial state $q_0$ of $\Aphi$.

In fact, the state space of $\A$ must reflect one more complication. 
The ABT $\A$ not only has to guess a computation of a system and a run
of $\Aphi$ on it, the run of $\Aphi$ must be {\em accepting}.    
For that reason we also need to preserve information regarding 
$\Aphi$'s passing through an accepting state during a run segment.
In particular, when considering a call transition that stand for a
computation segment during which $\Aphi$ moved from $q$ to $q'$, it is
sometimes important whether during that run segment $\Aphi$ passed
through an accepting state. 
For that reason, states of the first type (that verify call
transitions) come in two flavors:
First, states $\tuple{q,q',i,0}$ that retain the meaning explained above. 
Second, states $\tuple{q,q',i,1}$ in which $\A$ has to verify that
in addition to the existence of a computation segment and an $\Aphi$ 
run segment as above, the run segment of $\Aphi$ {\em must} pass through
an accepting state.
Similarly, when $\A$ reads a component $C$ while in state $q$, it has to 
verify there is a computation that does not return on which $\Aphi^q$ has 
an accepting run. One of the ways this might happen, is that the
$C$ would make a pending call to some other component $C'$. If this is
the case, we need to keep track of whether an 
accepting state was seen from the entrance to $C$ until the call to
$C'$.
For that reason, states of the type $q$ also have two flavors:
$\tuple{q,0}$ and $\tuple{q,1}$ (where the second type stands for the 
constrained case in which an accepting state must be visited).
Thus, the formal definition of $\A$'s states set is 
$Q_\A = Q^2\times [n_R]  \times \{0,1\} \bigcup Q\times\{0,1\}$. 

\noindent
{\sf The transitions of $\A$}:
Intuitively, when $\A$ reads an input-tree node $v$ and its labeling
component $C$, the ABT $\A$ guesses an augmented computation and
a corresponding augmented run that take place in $C$.  Furthermore, 
for every call transition in the guessed augmented computation,
the ABT $\A$ sends a copy of itself to the direction of the call to
ensure the call transition corresponds to an actual computation
segment. Thus, if the call transition is from $s_C^j$ to $s_R^k$ and 
$\Aphi$ is moves from $q$ to $q'$ on that transition, then for some
$b\in \{0,1\}$ the ABT $\A$ sends a state $\tuple{q,q',k,b}$ to the
$j$-th direction (how $b$ is chosen is explained below).
The transition relation, therefore, has the following high level structure:
a disjunction over possible augmented computations and runs, where for
each augmented run a conjunction over all call transitions sending the
corresponding $\A$'s states to the correct directions. 

Before going into further detail, we introduce some notation:
Given an augmented computation of $C$ that begins in state $s$ and ends in
state $s'$ and an augmented run of $\Aphi$ on it that begins in state
$q$ and ends in state $q'$ we say that the beginning 
{\em configuration} is  $(s,q)$ and the final {\em configuration} is
$(s',q')$. Transitions of $\Aphi$ that have to do with calls or
returns have a hierarchical symbol associated with them. If the
composition $C$ is in state $s$, the ABT $\A$ is in state $q$ and
a hierarchical symbol $p$ is associated then the {\em configuration}
is $(s,q,p)$. Given two configuration $c_1$ and $c_2$ then $c_2$ is
{\em reachable in $C$} from $c_1$ if there exists computation segment
of $C$, that contain no call transitions, that begins in $c_1$ and
ends in $c_2$. 
The configuration $c_2$ is {\em reachable through accepting state in
$C$} from $c_1$ if  there exists computation segment of $C$, that
contain no call transitions, that begins in $c_1$ and ends in $c_2$,
and on which $\Aphi$ visits an accepting state. 

Next, we describe the transitions out of a state $\tuple{q,q',k,0}$. 
This is the simplest case as it does not involve analyzing whether an
accepting state of $\Aphi$ is visited. 
Assume $\A$ is in state $\tuple{q,q',k,0}$ when it reads a component $C$. 
Intuitively, this means that $\A$ has to guess an augmented
computation of $C$ that begins at $C$'s initial state, and ends in
$C$'s $k$-th return state, and an augmented run of $\Aphi$ on that
computation that begins in state $q$ and ends in state $q'$. 
In fact, instead of explicitly guessing the entire augmented
computation and run, what $\A$ actually guesses are only  
the call transitions appearing in the computation, and the state
transitions of $\Aphi$ corresponding to these call transitions.
These are needed as they define the states of $\A$ that will be sent
in the various directions down the tree.
The computation begins when $C$ is in its initial state $s_0$, and
$\Aphi$ is in state $q$. Thus the beginning configuration is $(s_0,q)$. 
The first call transition source is some call state $s_C^{j_1}$ of
$C$, some state $q_1$ of $\Aphi$ and a hierarchical symbol $p_1$ of $\Aphi$.   
Thus the first computation segment ends in configuration 
$(s_C^{j_1},q_1,p_1)$.  Note that it must be the case that the
configuration $(s_C^{j_1},q_1,p_1)$ is reachable in $C$ from $(s_0,q)$.
The target of the call transition is some configuration
$(s_R^{k_1},q_1',p_1)$. 
At this stage, i.e. when $\A$ reads $C$, the target configuration is
only constrained by sharing the hierarchical symbol with the call
transition source.
The constraints on the possible states in the target configurations
depend on components down the tree that $\A$ will read only at a later
stage of its run. 
The configuration which is the source of the next call transition,
however, again has to be reachable from $(s_R^{k_1},q_1',p_1)$. 

Our approach, therefore is to define a graph $G_C$ whose vertexes are 
configurations, and there exists an edge from a source configuration
to a target configuration if it is possible to reach the target from
the source (see earlier discussion of configurations).  Recall the notation
$C = \tuple{ \Sigma_I,\Sigma_O, S,s_0, s_e^R,  S_C, S_R, \delta, L}$, 
where $s_e^R = \{ s_e^i \}_{i=1}^{n_R}$,  $S_C = \{s_C^i\}_{i=1}^{n_C}$, 
and $S_R = \{ s_R^i\}_{i=1}^{n_R}$. 
The vertex set $V_C$ of $G_C$ is the union of four sets:
(1) Initial configurations $\{s_0\} \times Q$.
(2) Call configurations $S_C \times Q\times P$.
(3) Re-entry configurations $s_e^R \times Q\times P$.
(4) Final configurations $\{s_R^k\} \times Q$.

There are two types of edges in $G_C$. {\em Component edges} reflect
reachability in $C$. There is a component edge in $G_C$ from
configuration $c_1$ to configuration $c_2$ iff $c_2$ is reachable in
$C$ from $c_1$. {\em Call edges} capture call transitions and the
corresponding state changes in $\Aphi$. There is a call edge in $G_C$
between $c_1 = (s,q,p)$ and $c_2 = (s',q',p')$ if $s$ is a call state,
$s'$ is a re-entry state, and $p = p'$. 
   
An augmented computation and run of $\Aphi$ on it, correspond to a path in
$G_C$.  When $\A$ is in state $\tuple{q,q',k,0}$ and reads a component $C$
it guess a path in $G_C$ from $\tuple{s_0,q}$ to $\tuple{s_R^k,q'}$.  
If there exists such a path in $G_C$ there exists a short path of
length bounded by $|V_C|$, i.e. the number of vertexes in $G$. 
We denote by $\Path(q,q',s_R^k)$ the set of paths from $(s_0,q)$ to 
$(s_R^k, q' )$ of length bounded by $|V_C|$.
For each path $\pi\in \Path(q,q',s_R^k)$, we denote by
$E_C(\pi)$ the set of call edges appearing in $\pi$.
For a call edge $e = \tuple{(s_C^i,q,p),(s_e^j,q',p)}$, we denote 
$\dir(e) = i$, $\re(e) = j$, $q(e) = q$, and $q'(e) = q'$.
The transitions from $\tuple{q,q',k,0}$ are defined:

$$\delta(\tuple{q,q',k,0},C) = 
\bigvee_{\pi\in  \Path(q,q',s_R^k)}~\bigwedge_{ e\in E_C(\pi)}
(\dir(e), \tuple{ q(e),q'(e),\re(e),0}).$$

Intuitively, a path in $G_C$ is guessed and for each call edge $e$, 
the state $\tuple{ q(e),q'(e),\re(e),0}$ is sent in the direction of
the call, i.e. $\dir(e)$.

Next, we describe the transitions out of a state $\tuple{q,q',k,1}$. 
This case a very similar to the case of transitions out of
$\tuple{q,q',k,0}$ outlined above. 
The difference is that in this case $\Aphi$ must visit an accepting
state during its augmented run. 
There is no restriction, however, that the accepting state will be
visited when the control is held by the component $C$. It is possible
that the accepting state will be visited when some other (called)
component is in control. 
Intuitively, as in the $\tuple{q,q',k,0}$ case, the ABT $\A$ guesses a path
in $G_C$ from the initial to the final configuration, in addition,
$\A$ guesses an edge from the path in which an accepting state should
be visited. For component edges, it is possible to make sure that
guessed edges represent computations on which $\Aphi$ visits an
accepting state. For call edges, the task of verifying that an
accepting state is visited, is delegated to the state of $\A$ that is
sent in the direction of the call (by sending a state whose last bit
$b$ is 1).

Formally, a component edge in $G_C$ from configuration $c_1$ to
configuration $c_2$ is an {\em accepting edge} 
iff $c_2$ is reachable in $C$ through an accepting state from $c_1$. 
Note that if there exists a path from a configuration $c_1$ to
configuration $c_2$ that visits an accepting edge, then there exists
one of length at most $2|V_G|$ (a simple path to the accepting edge
and a simple path from it).
For $q,q'\in Q$, $s_R^k\in S_R$,  we denote by
$\Path_a(q,q',s_R^k)$ a set of pairs in which the 
first element is a path $\pi$ of length at most $2|V_C|$  from
$(s_0,q)$ to $( s_R^k, q' )$, and the second element is a function
$f$ mapping the edges in $\pi$ into  $\{0,1\}$ such that:
\begin{compactenum}
\item Exactly one edge is mapped to 1, and
\item If the edge mapped to 1 is a component edge then it is also an
      accepting edge. 
\end{compactenum}
Finally,
$$\delta(\tuple{q,q',k,1},C) = 
\bigvee_{(\pi,f)\in
  \Path_a(q,q',s_R^k)}~\bigwedge_{e \in E_C(\pi)}
(\dir(e), \tuple{ q(e),q'(e),\re(e),f(e)}).$$

Next, we describe the transitions out of a state $\tuple{q,b}$, for 
$b\in\{0,1\}$, in which 
$\A$ has to verify there exists an accepting
augmented computation of $C$ that does not return, and a run of
$\Aphi^q$ on it. 
There are three distinct forms such a computation might take. (1) First, it
is possible that the computation has a infinite suffix in which $C$
remains in control.  
(2) Second, it is possible that the eventually the component
makes some pending call.
(3) Finally, it is possible that the computation contains infinitely
many calls to, and returns from, other components.
We deal with each of the case separately, we construct a partial
transition relation for each case, the transition relation itself is
the disjunction of these three parts.    

First, to deal with infinite (suffixes) of computations that never
leave the component, we modify the graph $G_C$ to consider such runs. 
We introduce a new vertex $\bot$ that intuitively stand for ``an
infinite (suffix) of a computation in $C$, and an accepting run of
$\Aphi$ on it''. 
There is an edge from a configuration $c$ to $\bot$,
if there is an exists an infinite computation of $C$ that begins in
configuration $c$, never enters an exit state, and there exists an 
accepting run of $\Aphi$ on it. There are no edges from $\bot$.

The first part of the transition relation is
$$\delta_1(\tuple{q,b},C)  = \bigvee_{\pi\in  Path(q,\bot)}~\bigwedge_{ e \in E_C(\pi)}
      (\dir(e), \tuple{ q(e),q'(e),\re(e),0})$$

Second, we have to deal with computation segments that end in a pending 
call.  These types of computations are easily dealt with in terms of
paths in $G_C$ to a configuration in which the state is a call state. 
We would like to note two details. 
First, note that by the definition of an accepting run of an NWBA, 
the hierarchical symbols associated with pending calls must be from
the set $P_f$.
Second, note the difference between states of type $\tuple{q,0}$ and
type $\tuple{q,1}$. In the $\tuple{q,0}$ case there is no constraint that 
has to do with $\Aphi$'s accepting states. Therefore, the second part of
the transition relation is  
$$\delta_2(\tuple{q,0},C) = 
\bigvee\limits_{
\begin{array}{c} s_C^k\in S_C,\\ q'\in Q,\\ p\in P_f\end{array}}
~ \bigvee_{\pi\in Path(s_c^k,q,q',p)}
~ \bigvee_{b\in \{ 0, 1 \}} ((k,\tuple{ q,b }) \land
      \bigwedge_{ e \in E_C(\pi)} (\dir(e), \tuple{ q(e),q'(e),\re(e),0}))$$
In the $\tuple{q,1}$, case an accepting state of $\Aphi$ must be
visited, therefore the second part of the transition relation is
$$\delta_2(\tuple{q,1},C) = 
\bigvee\limits_{
\begin{array}{l}s_C^k\in S_C,\\ q'\in Q, \\p \in P_f\end{array}} 
~\bigvee_{(\pi,f)\in Path_a(s_c^k,q,q',p)}
 ~\bigvee_{b\in \{ 0, 1 \}} ((k,\tuple{ q,b}) \land
      \bigwedge_{ e \in E_C(\pi)} (\dir(e), \tuple{ q(e),q'(e),\re(e),f(e)}))$$

We have to deal with suffixes of computation that contain
infinitely many call to, and return from, other components. 
Such computations must contain a configuration that appears twice.  
A {\em $\rho$-path} in $G_C$ is a path in $G_C$ in which the last vertex
is visited more then once along the path (intuitively, closing a cycle).
The part of the path between the first and last
occurrences of the last vertex is the {\em cycle}.
As we require $\Aphi$'s run to accept, an
accepting state from $Q_f$ should be visited during a segment of a
computation that correspond to an edge on the cycle.
An {\em accepting $\rho$-path} is a path in which one of the edges  
along the cycle is accepting. 
There exists an accepting $\rho$-path iff there exists an accepting
$\rho$-path of length at most $3|V_C|$ 
(a simple path to the cycle, and a cycle of length at most $2|V_C|$). 

For $q,\in Q$ we denote by $\rhopath(q)$ a set of
pairs in which: (1) the first element $\pi$ is a $\rho$-path
of length at most $3V_C$ starting at $(s_0,q)$; (2) the second element
is a function $f$  mapping the edges in $\pi$ into $\{0,1\}$ such that: 
(1) exactly one edge  is mapped to 1, this edge is on the cycle, and
(2) if the edge mapped to 1 is a component edge then it is also an
accepting edge. 
The third part of the transition relation is 
$$\delta_3(\tuple{q,b},C) =  
\bigvee_{(\pi,f)\in \rhopath(q)}~
  \bigwedge_{e \in E_C(\pi)}(\dir(e), \tuple{ q(e),q'(e),\re(e),f(e)}) $$
Finally, for a state $\tuple{q,b}$ the transition relation is 
$$\delta(\tuple{q,b},C) = \delta_1(\tuple{q,b},C) \lor 
\delta_2(\tuple{q,b},C) \lor \delta_3(\tuple{q,b},C)$$ 
This concludes the definition of the transition relation

\noindent
{\sf Accepting states of $\A$}:
Finally, the set $F$ of $\A$'s accepting states is the set 
$Q \times \{ 1 \}$. Intuitively, in an accepting run
tree of $\A$, each path is either finite, i.e. ends a nodes whose
transition relation is \true, or an infinite path of states that
correspond to pending calls. 
For the run to be accepting, an accepting state must be visited
infinitely often along such infinite path of pending calls.  
As we defined the accepting-states set to be $Q \times \{ 1 \}$,
an infinite path of pending calls is accepted iff in the run of
$\Aphi$ visits an  $\Aphi$ accepting state infinitely often. 
This concludes the main construction, 

We now prove the correctness in several stages. 
First, we prove a claim regarding states of the form $\tuple{q,q',i,b}$.
\begin{claim}
\label{thm: return states claim}
For a composition tree $T$, there exists a finite accepting run tree
of $A^{\tuple{q,q',i,b}}$ on $T$ iff there exits a computation $\pi$
of the composition induced by $T$, such that:
\begin{compactenum}
\item 
  $\pi$ ends by returning from the $i$-th return state $s_R^i$ of
  $T$'s root.
\item 
  there exists a run $r$ of $\Aphi^q$ on the word induced
  by $\pi$ that ends in $q'$.  
\end{compactenum}
Furthermore, for states $\tuple{q,q',i,1}$ the iff statement is true
for a run $r$ that visits an accepting state from $Q_f$.  
\end{claim}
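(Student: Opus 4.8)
The plan is to prove the claim by induction, exhibiting an explicit correspondence between finite accepting run trees of $A^{\tuple{q,q',i,b}}$ and returning computation segments of the induced composition carrying an $\Aphi$-run from $q$ to $q'$. The natural induction measure is the nesting depth of calls in the computation segment, which matches the height of the finite run tree: a run tree of height $0$ uses a disjunct of $\delta(\tuple{q,q',i,b},C)$ whose chosen path in $G_C$ has no call edges (so the conjunction over $E_C(\pi)$ is empty and the root transition evaluates to $\true$), and this corresponds exactly to a call-free computation segment of the root component $C$ from $(s_0,q)$ to $(s_R^i,q')$, i.e.\ to a single component edge of $G_C$; for $b=1$ this component edge must be accepting, which by definition means $\Aphi$ visits $Q_f$ along the segment. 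This is the base case for both directions.

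For the forward direction (run tree $\Rightarrow$ computation), I would read off from the root of the accepting run tree the disjunct it satisfies, namely a path $\pi\in\Path(q,q',s_R^i)$ (a pair $(\pi,f)\in\Path_a(q,q',s_R^i)$ when $b=1$) from $(s_0,q)$ to $(s_R^i,q')$ in $G_C$. Each component edge of $\pi$ is, by the definition of reachability in $C$, realized by a concrete call-free segment of $C$ together with a matching $\Aphi$-run segment; each call edge $e$ has a corresponding child copy of $\A$ in direction $\dir(e)$ in state $\tuple{q(e),q'(e),\re(e),b'}$, to which the induction hypothesis applies, yielding a returning computation segment of the $\dir(e)$-subtree with an $\Aphi$-run from $q(e)$ to $q'(e)$ (through $Q_f$ when $b'=1$). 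Splicing these segments in the order dictated by $\pi$ produces a single returning computation segment of $C$: each call edge becomes an actual call descending to the child and returning, and the shared hierarchical symbol required on call edges is exactly what the NWBA return transition $\delta_r$ consumes at the matching return, so the concatenated $\Aphi$-run is legal and runs from $q$ to $q'$. When $b=1$, the unique edge with $f=1$ supplies the visit to $Q_f$, either inside $C$ (accepting component edge) or inside the marked child (invoked with $b'=1$).

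For the reverse direction (computation $\Rightarrow$ run tree), I would decompose the given returning segment $\pi$ at the \emph{top-level} calls of $C$, i.e.\ the calls issued while $C$ itself is in control. Each maximal call-free stretch of $C$ between consecutive top-level calls is a component edge of $G_C$, and each top-level call together with its matching return is a call edge; concatenating these yields a path in $G_C$ from $(s_0,q)$ to $(s_R^i,q')$, which can be shortened to one in $\Path(q,q',s_R^i)$ using the length bound $|V_C|$ already noted. The segment nested inside each top-level call is itself a returning computation segment of the corresponding child subtree with an $\Aphi$-run from $q(e)$ to $q'(e)$ and strictly smaller nesting depth, so the induction hypothesis produces a finite accepting run tree of $A^{\tuple{q(e),q'(e),\re(e),b'}}$ on that child; assembling these under the chosen disjunct gives the required finite run tree. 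For $b=1$ I would locate the first visit of $r$ to $Q_f$ and mark the corresponding edge with $f=1$: if that visit occurs in a call-free stretch of $C$ the edge is an accepting component edge, and if it occurs inside a nested call the call edge is marked and the child is invoked with bit $1$.

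The main obstacle, and the step that needs the most care, is the bookkeeping that glues the two structures together: establishing that paths in $G_C$ are in exact correspondence with returning segments decomposed at top-level calls (including the claim that the hierarchical symbol carried on a call edge is precisely the one the NWBA attaches to the matching call and reuses at the return, so that splicing never violates $\delta_r$), and propagating the single accepting obligation through exactly one marked edge so that ``$\Aphi$ visits $Q_f$ somewhere in the whole segment'' is faithfully captured whether the witness lies in a local component edge or is delegated to a descendant via the bit $b$. Verifying that the short-path bounds lose no generality is routine given the remarks already made in the construction.
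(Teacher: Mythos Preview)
Your proposal is correct and follows essentially the same approach as the paper: both directions proceed by induction (the paper uses traversal depth of $\pi$ for one direction and run-tree height for the other, which you correctly identify as the same measure), decomposing a returning computation at its top-level calls into a path in $G_C$ and invoking the hypothesis on the nested call segments, with the $b=1$ obligation discharged by marking a single edge. Your treatment is in fact slightly more explicit than the paper's about shortening the $G_C$-path to meet the length bound in $\Path$/$\Path_a$ and about the hierarchical-symbol matching at call/return pairs, but these are elaborations of the same argument rather than a different route.
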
 
\begin{proof}
Assume first that there exist computation $\pi$ and run $r$ as
claimed. We prove that there exists a finite accepting run tree of
$\A^{\tuple{q,q',i,b}}$ on $T$.
As the computation $\pi$ returns from the root, the depth $h$ of the
subtree traversed by $\pi$ in $T$ is bounded.    
The proof is by induction on the depth $h$.
The base case is a depth 1, i.e., only the root component is
traversed. Then, the existence $\pi$ implies there exists an edge in
$G_C$ from $\tuple{s_0,q}$ to $\tuple{s_R^i,q'}$. Therefore, 
there exists a path in $G_C$, between these vertexes, that does not
contain any call edges. Thus, the transition relation evaluates to
$\true$, implying that there exists a finite accepting run of
$\A^{\tuple{q,q',i,0}}$ on $T$.
Furthermore, if $r$ visits a state from $Q_f$ then the relevant edge
is an accepting edge and there is an accepting run of
$\A^{\tuple{q,q',i,1}}$ on $T$.  
Assume now, the induction hypothesis for traversal of maximal depth
$h$, we prove it for traversals of maximal depth $h+1$.
The computation $\pi$ can be broken into segments in which the control
is in the root component and segments in which some other (called)
components are in control. Each segment corresponds to an edge of $G_C$,
where segments of computation in which the root is in control,
correspond to component edges, and the rest of the segments correspond
to call edges. 
Each call edge, correspond to a successor of the root in the
composition tree, and for each call edge, the induction hypothesis
implies the existence of accepting run tree on the corresponding
composition subtree. Thus there exists an accepting run tree as claimed.   
Furthermore, if $r$ visits $Q_f$ then the visit is made
during some computation segment. The edge corresponding to that
computation segment can be mapped to $1$ by the function $f$ from the
definition of the transition relation for $\tuple{q,q',i,1}$. 
It follows that if $r$ visits a state from $Q_f$ then there exists a
an accepting run of $\A^{\tuple{q,q',i,1}}$ on $T$. 

Assume now a finite accepting run tree of $\A^{\tuple{q,q',i,b}}$
exists, we prove the existence of a computation $\pi$ and a run $r$ as
needed.  The proof is by induction on the height $h$ of the accepting
run tree.  The base case is a run tree of height 1.
Then, the transition relation $\delta$ must evaluate to $\true$ on the
root. Thus, the path in $G_C$ contains no call edges, and therefore
by the definition of $\delta$ there exist $\pi$, and $r$ as claimed. 
Furthermore, if $b=1$, the component edge must be an accepting
edge implying that $r$ visits $Q_f$. 
Assume now, the induction hypothesis for run trees of height $h$, we
prove it for run trees of height $h+1$.
The run-tree root is labeled by some set $S$ of pairs of directions
and $\A$-states  that satisfy $\delta$.  
This choice of states and directions corresponds to a path
in $G_C$, in which some edges are call edges and some are component
edges. By the definition of $\delta$ there exist computation segments
corresponding to component edges, and by the induction hypothesis
there exist computation segments corresponding to call edges. Splicing
these computation segments together we get the a computation $\pi$,
and $r$ as claimed. 
Furthermore, if $b=1$ then one of the edges is an accepting edge and
therefore, $r$ visits $Q_f$.
\end{proof}

By very similar reasoning, we can show that there exists a finite
accepting run tree of $A^{\tuple{q,b}}$ on a composition tree
$T$, iff there exists a computation $\pi$ of $T$ such that: 
(1) $\pi$'s traversal is bounded in a finite subtree of the
composition tree ; (2) $\pi$ never returns from the root of $T$; and
(3) there exists an accepting run $r$ of $\Aphi$ on $\pi$.
Unlike, the $\tuple{q,q',i,b}$, however, we have also to consider runs
that are not bounded in a finite subtree of $T$.
Next, we show that it is enough to consider computations that
make infinitely many pending calls.

\begin{observation}\label{thm: bounded stack observation}
For a library $L$, an NWBA $\Aphi$ and a composition tree $T$ 
if there exists a computation $\pi$ of $T$, in $L(\Aphi)$, in which a
node $v\in T$ is visited infinitely often then there exists 
computation $\pi'$ of\ $T$, in $L(\Aphi)$, that only traverses a
finite subtree of $T$.  
\end{observation}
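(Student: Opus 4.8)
The plan is to produce a ``lasso'' computation: a finite prefix followed by a single pumpable, \emph{well-matched} loop, and then to argue that repeating this loop forever yields a computation $\pi'$ that stays inside a finite subtree and is still accepted by $\Aphi$. The essential use of the hypothesis is only to guarantee that \emph{some} stack depth recurs infinitely often.

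First I would identify the level at which $\pi$ eventually settles. Since a node $v$ is visited infinitely often, the stack depth $|v|$ is reached infinitely often, so the least depth $d_0$ reached infinitely often by $\pi$ is well defined. Depths below $d_0$ occur only finitely often, so there is a time $T^*$ after which $\pi$ never drops below depth $d_0$. Consequently the bottom $d_0$ stack entries are frozen after $T^*$: changing the depth-$d_0$ ancestor of the current node would require popping it, i.e.\ reaching depth $d_0-1$, which never happens. Hence there is a single node $u$ at depth $d_0$ such that, after $T^*$, the traversal stays inside the subtree rooted at $u$, and $u$ itself is visited infinitely often.

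Next I would extract the loop by a double pigeonhole. The pair (local state of $u$'s component, state of $\Aphi$) ranges over a finite set, so some configuration $(s,q)$ recurs at infinitely many visits to $u$ after $T^*$. Fix an accepting run $r$ of $\Aphi$ on $\pi$, which exists since $\pi\in L(\Aphi)$; as $r$ visits $Q_f$ infinitely often, a standard interval argument yields two visits to $u$ at times $a<b$, both with configuration $(s,q)$, with $r$ visiting a state of $Q_f$ somewhere in $[a,b)$. I would then let $\pi'$ be the computation induced by the input word $w_{\mathrm{pre}}\cdot w_{\mathrm{loop}}^{\omega}$, where $w_{\mathrm{pre}}$ and $w_{\mathrm{loop}}$ are the inputs consumed by $\pi$ up to $a$ and between $a$ and $b$. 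Determinism of the induced transducer ensures each replay of $w_{\mathrm{loop}}$ reproduces the same segment, so $\pi'$ visits only the finitely many nodes traversed in $[0,b)$ --- a finite subtree.

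The step that needs the most care, and the main obstacle, is showing $\pi'\in L(\Aphi)$, which hinges on the loop being well-matched. Because $a,b>T^*$, the whole segment $[a,b)$ stays at depth $\ge d_0$ and has depth exactly $d_0$ at both endpoints; hence every call made inside the loop is matched by a return inside the loop and no interior return pops below $u$, so the loop has no pending calls and no unmatched returns. This is exactly what lets me splice a run: follow $r$ on $w_{\mathrm{pre}}$ to reach $\Aphi$-state $q$ with the calls along the path to $u$ pending, then repeat $r$'s segment on $w_{\mathrm{loop}}$ (which begins and ends in $q$ with all its hierarchical symbols internal) once per copy. The resulting run $r'$ visits $Q_f$ in every loop copy, hence infinitely often; and its only pending calls are those along the frozen path to $u$, whose hierarchical symbols are inherited verbatim from $r$ and therefore lie in $P_f$. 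Both acceptance conditions of an NWBA run thus hold, so $\pi'$ is accepted, completing the argument.
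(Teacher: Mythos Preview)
Your argument is correct. Both you and the paper begin identically: identify the minimal recurring depth, observe that the computation eventually stays in the subtree rooted at the corresponding node~$u$, and then exploit finiteness of the local configuration space. From there the two diverge. You run a classical \emph{lasso} argument: pigeonhole on the pair (local state of $u$'s component, $\Aphi$-state) to find two visits to $u$ in the same configuration with an accepting state in between, check that the intervening segment is well-matched because depth never dips below $d_0$, and pump it. The paper instead partitions the infinitely many \emph{call excursions} from $u$ into finitely many equivalence classes (keyed by call state, re-entry state, $\Aphi$ state before and after, and whether $Q_f$ was visited), picks one representative per class, and splices a new computation by replacing every excursion with its representative; the traversal depth is then bounded by the maximum depth among representatives.

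Your route is slightly more elementary and yields an ultimately periodic witness, which makes verifying NWBA acceptance very clean (and you are more explicit than the paper about why the pending-call hierarchical symbols lie in $P_f$). The paper's segment-replacement idea is a bit more flexible --- it does not force periodicity and aligns directly with the ``call edge'' abstraction used elsewhere in the construction --- but for this observation either technique suffices.
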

\begin{proof}
First, note that it is enough to show that there exists a computation
$\pi'$ of $T$, in $L(\Aphi)$, such that $\pi'$ only traverses a finite
subtree of the subtree rooted at $v$ (regardless of what happens
outside that subtree). 
The reason is w.l.o.g. $v$ can be assumed to be a node of minimal
depth that is visited infinitely often by $\pi$. 
As such, the computation must eventually remain in the subtree
rooted at $v$ (since if $v$ is not the root, $v$'s predecessor is
visited only finitely often).

Next, let $\pi_1,\pi_2$ be two computation segments of $\pi$, and
$r_1,r_2$ be the corresponding parts of $\Aphi$'s accepting run on
$\pi$ such that:
\begin{compactenum}
\item
$\pi_1$ and $\pi_2$ begin by entering the same call state $s_C^i$.
\item 
$r_1$ and $r_2$ begin by the same $\Aphi$ state $q$.
\item 
$\pi_1$ and $\pi_2$ end when the control is returned to $v$ by the
same re-entry state $s_e^j$.
\item 
$r_1$ and $r_2$ end in the same $\Aphi$ state $q'$.
\item 
$r_1$ visits $Q_f$ iff $r_2$ visits $Q_f$.
\end{compactenum}
Then, $\pi_1$ and $\pi_2$ are interchangeable while the
resulting computation still satisfies $\varphi$. 
Thus, while $v$ is returned to infinitely often, there are only
finitely many equivalence class of interchangeable computation
segments. Choosing a single representative from each equivalence
class, we  can splice a computation whose traversal depth is
bounded by the traversal depths of the representatives.  
\end{proof}

Observation~\ref{thm: bounded stack observation} implies 
that if there is a computation, in $L(\Aphi)$, that traverses an
unbounded subtree of the composition tree, and does not make
infinitely many pending calls, then there is also 
a computation, in $L(\Aphi)$, that traverses a finite subtree of the
composition tree.
Therefore, when considering computations that traverse an unbounded depth
subtree of a composition tree, it is enough to consider compositions
in which the computation, whose word is in $L(\Aphi)$, has infinitely
many pending calls. 
The definition of $\A$'s accepting states set ensures correctness with
respect to such computations. 
An accepting run tree, of $\Aphi$ on $T$, with an infinite path, must
visit infinitely often an accepting state (i.e., a state
$\tuple{q,1}$) which means it is possible to construct a
computation of $T$ that makes infinitely many pending calls, and on
which $\Aphi$ would have an accepting run.
On the other hand, an accepting computation of $\Aphi$ that makes
infinitely many pending calls, implies the existence of an accepting
run tree of $\A$, with an infinite path that visits an accepting state
infinitely often.

Finally we provide a complexity analysis. 
For a NWBA $\Aphi$, with $n_{\Aphi}$ states, and a library $\L$ with
$m_L$ components in which the components are of size $m_C$,  
the construction presented here, creates an ABT $\A$ with at most
$O(n_{\Aphi}^2\cdot m_C)$ states. Note, however, that the number of states
does not tell the entire story. 
First, the computation of $\delta$ involves
reachability analysis of the components. Luckily, the reachability
analysis is done separately on each component (in fact, the Cartesian
product of each component with $\Aphi$) and therefore the complexity
is $O(n_{\Aphi}\cdot m_C \cdot m_L)$. 
On the other hand, $\A$ is an alternating automaton with a transition
relation that may be exponential in the size of the its state space. 
Thus, $\A$ can {\em not} be computed in space polynomial in the parameters.
The computation of $\A$ involves an analysis of the paths in $G_C$ and
requires space polynomial in $n_{\Aphi}$ and $m_C$.

\section{Discussion}
We defined the problem of NWTL synthesis from library of recursive
components, solved it, and shown it to be 2EXPTIME-complete. We now note
that the ideas presented above are quite robust with respect to possible
variants of the basic problem.

The model was chosen for simplicity rather than
expressiveness, and can be extended and generalized.
First, we can consider several call values per component. 
This translates to each component having a set $S_0\subseteq S$ of
initial states (rather than a single initial state $s_0\in S$). 
Next, we can add greater flexibility with respect to return values.
A single return value may have different meanings on different calls. 
Therefore, compositions might be allowed to perform some ``return-value
translation''; matching return states to re-entry states per call, 
rather than matching return states to re-entry states uniformly. 
This can be modeled by augmenting each composition element
$\tuple{i,C_i,f_i}$ by another function $r_i:S_C\to ( [n_R] \to s_e^R )$
that maps each call state into a matching of return values to re-entry 
states.  The synthesis algorithm, for the augmented model, remains 
almost the same.  In the augmented model, a component implementation 
depends on the call value $s_0\in S_0$ and the $r_i$ function.
Therefore, instead of working with composition trees, labeleded by $\L$,
we'd work with augmented composition trees, labeled by tuples
$\tuple{C,s_0,r_i}$.  Our algorithm and analysis can then be
extended appropriately.

Another possible extension might be to consider bounded call stacks.
Theoretically, ``call and return'' models allow for unbounded call 
stacks.  Real life systems, however, have bounded call stacks.
One can consider a variant of the synthesis problem, in which the
output must have bounded call stack, where the bound is
an output of the synthesis algorithm, rather then an apriori given 
input.  To adapt the algorithm to this case, we have to find a finite
composition tree in which all computations satisfy $\varphi$, as well
as no computation makes a call from a leaf (ensuring bounded stack). 
To that end, we construct two alternating automata on finite trees 
(AFTs).  First, an AFT $\A_1$ for finite composition trees in which 
there exists a computation violating $\varphi$. The AFT $\A_1$ is 
simply the ABT from Theorem~\ref{thm: main construction}, when 
considered as an AFT, and in which no state is considered accepting. 
In addition, we construct an AFT $\A_2$ that accepts trees that may 
perform a call from one of the leaves (see longer version of this
paper.) The union of the languages of $\A_1$ and $\A_2$ contain all finite
composition trees that do not realize $\varphi$.  An AFT for the union 
can then be complemented and checked for emptiness as in the infinite 
case.  Thus, the solution techniques presented in this paper are quite
robust and extend to natural variants of the basic model.

\noindent
{\sf Acknowledgements}
Work supported in part by NSF grants CCF-0728882, and CNS 1049862, 
by BSF grant 9800096, and by gift from Intel.

\bibliographystyle{eptcs}
\bibliography{reactive}

\begin{thebibliography}{10}
\providecommand{\bibitemdeclare}[2]{}
\providecommand{\urlprefix}{Available at }
\providecommand{\url}[1]{\texttt{#1}}
\providecommand{\href}[2]{\texttt{#2}}
\providecommand{\urlalt}[2]{\href{#1}{#2}}
\providecommand{\doi}[1]{doi:\urlalt{http://dx.doi.org/#1}{#1}}
\providecommand{\bibinfo}[2]{#2}

\bibitemdeclare{inproceedings}{dAH05}
\bibitem{dAH05}
\bibinfo{author}{L.~de~Alfaro} \& \bibinfo{author}{T.A. Henzinger}
  (\bibinfo{year}{2005}): \emph{\bibinfo{title}{Interface-based design}}.
\newblock In \bibinfo{editor}{M.~Broy}, \bibinfo{editor}{J.~Gr{\"u}nbauer},
  \bibinfo{editor}{D.~Harel} \& \bibinfo{editor}{C.A.R. Hoare}, editors: {\sl
  \bibinfo{booktitle}{Engineering Theories of Software-intensive Systems}},
  \bibinfo{series}{NATO Science Series: Mathematics, Physics, and Chemistry
  195}, \bibinfo{publisher}{Springer}, pp. \bibinfo{pages}{83--104}.

\bibitemdeclare{inproceedings}{Alu07}
\bibitem{Alu07}
\bibinfo{author}{R.~Alur} (\bibinfo{year}{2007}):
  \emph{\bibinfo{title}{Marrying words and trees}}.
\newblock In: {\sl \bibinfo{booktitle}{Proc.\ 26th ACM Symp. on Principles of
  Database Systems}}, pp. \bibinfo{pages}{233--242},
  \doi{10.1007/978-3-540-74510-5\_3}.

\bibitemdeclare{article}{AABEIL08}
\bibitem{AABEIL08}
\bibinfo{author}{R.~Alur}, \bibinfo{author}{M.~Arenas},
  \bibinfo{author}{P.~Barcel{\'o}}, \bibinfo{author}{K.~Etessami},
  \bibinfo{author}{N.~Immerman} \& \bibinfo{author}{L.~Libkin}
  (\bibinfo{year}{2008}): \emph{\bibinfo{title}{First-Order and Temporal Logics
  for Nested Words}}.
\newblock {\sl \bibinfo{journal}{Logical Methods in Computer Science}}
  \bibinfo{volume}{4}(\bibinfo{number}{4}).

\bibitemdeclare{article}{ABEGRY05}
\bibitem{ABEGRY05}
\bibinfo{author}{R.~Alur}, \bibinfo{author}{M.~Benedikt},
  \bibinfo{author}{K.~Etessami}, \bibinfo{author}{P.~Godefroid},
  \bibinfo{author}{T.~W. Reps} \& \bibinfo{author}{M.~Yannakakis}
  (\bibinfo{year}{2005}): \emph{\bibinfo{title}{Analysis of recursive state
  machines}}.
\newblock {\sl \bibinfo{journal}{ACM Transactions on Programming Languagues and
  Systems}} \bibinfo{volume}{27}(\bibinfo{number}{4}), pp.
  \bibinfo{pages}{786--818}, \doi{10.1145/1075382.1075387}.

\bibitemdeclare{inproceedings}{AEM04}
\bibitem{AEM04}
\bibinfo{author}{R.~Alur}, \bibinfo{author}{K.~Etessami} \&
  \bibinfo{author}{P.~Madhusudan} (\bibinfo{year}{2004}):
  \emph{\bibinfo{title}{A temporal logic of nested calls and returns}}.
\newblock In: {\sl \bibinfo{booktitle}{Proc.\ 10th Int. Conf. on Tools and
  Algorithms for the Construction and Analysis of Systems}}, {\sl
  \bibinfo{series}{Lecture Notes in Computer Science}} \bibinfo{volume}{2725},
  \bibinfo{publisher}{Springer}, pp. \bibinfo{pages}{67--79}.

\bibitemdeclare{article}{AM09}
\bibitem{AM09}
\bibinfo{author}{R.~Alur} \& \bibinfo{author}{P.~Madhusudan}
  (\bibinfo{year}{2009}): \emph{\bibinfo{title}{Adding nesting structure to
  words}}.
\newblock {\sl \bibinfo{journal}{Journal of the ACM}}
  \bibinfo{volume}{56}(\bibinfo{number}{3}), pp. \bibinfo{pages}{1--43},
  \doi{10.1007/11779148\_1}.

\bibitemdeclare{inproceedings}{BCLR04}
\bibitem{BCLR04}
\bibinfo{author}{T.~Ball}, \bibinfo{author}{B.~Cook},
  \bibinfo{author}{V.~Levin} \& \bibinfo{author}{S.K. Rajamani}
  (\bibinfo{year}{2004}): \emph{\bibinfo{title}{SLAM and Static Driver
  Verifier: Technology Transfer of Formal Methods inside Microsoft}}.
\newblock In: {\sl \bibinfo{booktitle}{Integrated Formal Methods}}, pp.
  \bibinfo{pages}{1--20}, \doi{10.1007/978-3-540-24756-2\_1}.

\bibitemdeclare{inproceedings}{BCGLM03}
\bibitem{BCGLM03}
\bibinfo{author}{D.~Berardi}, \bibinfo{author}{D.~Calvanese},
  \bibinfo{author}{G.~De Giacomo}, \bibinfo{author}{M.~Lenzerini} \&
  \bibinfo{author}{M.~Mecella} (\bibinfo{year}{2003}):
  \emph{\bibinfo{title}{Automatic Composition of E-services That Export Their
  Behavior}}.
\newblock In: {\sl \bibinfo{booktitle}{ICSOC}}, pp. \bibinfo{pages}{43--58},
  \doi{10.1007/978-3-540-24593-3\_4}.

\bibitemdeclare{article}{HOL97}
\bibitem{HOL97}
\bibinfo{author}{G.J. Holzmann} (\bibinfo{year}{1997}):
  \emph{\bibinfo{title}{The Model Checker {SPIN}}}.
\newblock {\sl \bibinfo{journal}{IEEE Transactions on Software Engineering}}
  \bibinfo{volume}{23}(\bibinfo{number}{5}), pp. \bibinfo{pages}{279--295}.

\bibitemdeclare{inproceedings}{KLVY11}
\bibitem{KLVY11}
\bibinfo{author}{O.~Kupferman}, \bibinfo{author}{Y.~Lustig},
  \bibinfo{author}{M.Y. Vardi} \& \bibinfo{author}{M.~Yannakakis}
  (\bibinfo{year}{2011}): \emph{\bibinfo{title}{Temporal Synthesis for Bounded
  Systems and Environments}}.
\newblock In: {\sl \bibinfo{booktitle}{Proc. 28th Symp. on Theoretical Aspects
  of Computer Science}}, pp. \bibinfo{pages}{615--626}.

\bibitemdeclare{inproceedings}{KV05c}
\bibitem{KV05c}
\bibinfo{author}{O.~Kupferman} \& \bibinfo{author}{M.Y. Vardi}
  (\bibinfo{year}{2005}): \emph{\bibinfo{title}{Safraless Decision
  Procedures}}.
\newblock In: {\sl \bibinfo{booktitle}{Proc.\ 46th IEEE Symp. on Foundations of
  Computer Science}}, pp. \bibinfo{pages}{531--540},
  \doi{10.1109/SFCS.2005.66}.

\bibitemdeclare{inproceedings}{LV09}
\bibitem{LV09}
\bibinfo{author}{Y.~Lustig} \& \bibinfo{author}{Moshe~Y. Vardi}
  (\bibinfo{year}{2009}): \emph{\bibinfo{title}{Synthesis from Component
  Libraries}}.
\newblock In: {\sl \bibinfo{booktitle}{Proc. 12th International Conference on
  Foundations of Software Science and Computational Structures (FOSSACS)}},
  {\sl \bibinfo{series}{Lecture Notes in Computer Science}}
  \bibinfo{volume}{5504}, \bibinfo{publisher}{Springer}, pp.
  \bibinfo{pages}{395 -- 409}, \doi{10.1007/978-3-642-00596-1\_28}.

\bibitemdeclare{inproceedings}{PV01}
\bibitem{PV01}
\bibinfo{author}{N.~Piterman} \& \bibinfo{author}{M.~Vardi}
  (\bibinfo{year}{2001}): \emph{\bibinfo{title}{From Bidirectionality to
  Alternation}}.
\newblock In: {\sl \bibinfo{booktitle}{26th Int. Symp. on Mathematical
  Foundations of Computer Science}}, {\sl \bibinfo{series}{Lecture Notes in
  Computer Science}} \bibinfo{volume}{2136}, \bibinfo{publisher}{Springer}, pp.
  \bibinfo{pages}{598--609}, \doi{10.1016/S0304-3975(02)00410-3}.

\bibitemdeclare{inproceedings}{PR89a}
\bibitem{PR89a}
\bibinfo{author}{A.~Pnueli} \& \bibinfo{author}{R.~Rosner}
  (\bibinfo{year}{1989}): \emph{\bibinfo{title}{On the Synthesis of a Reactive
  Module}}.
\newblock In: {\sl \bibinfo{booktitle}{Proc.\ 16th ACM Symp. on Principles of
  Programming Languages}}, pp. \bibinfo{pages}{179--190}.

\bibitemdeclare{inproceedings}{SPG07}
\bibitem{SPG07}
\bibinfo{author}{S.~Sardi{\~n}a}, \bibinfo{author}{F.~Patrizi} \&
  \bibinfo{author}{G.~{De Giacomo}} (\bibinfo{year}{2007}):
  \emph{\bibinfo{title}{Automatic Synthesis of a Global Behavior from Multiple
  Distributed Behaviors}}.
\newblock In: {\sl \bibinfo{booktitle}{AAAI}}, pp. \bibinfo{pages}{1063--1069}.

\bibitemdeclare{inproceedings}{Sif05}
\bibitem{Sif05}
\bibinfo{author}{J.~Sifakis} (\bibinfo{year}{2005}): \emph{\bibinfo{title}{A
  Framework for Component-based Construction Extended Abstract}}.
\newblock In: {\sl \bibinfo{booktitle}{Proc. 3rd Int. Conf. on Software
  Engineering and Formal Methods}}, \bibinfo{publisher}{IEEE Computer Society},
  pp. \bibinfo{pages}{293--300}, \doi{10.1109/SEFM.2005.3}.

\bibitemdeclare{inproceedings}{Var98}
\bibitem{Var98}
\bibinfo{author}{M.Y. Vardi} (\bibinfo{year}{1998}):
  \emph{\bibinfo{title}{Reasoning about the past with two-way automata}}.
\newblock In: {\sl \bibinfo{booktitle}{Proc.\ 25th Int. Colloq. on Automata,
  Languages, and Programming}}, {\sl \bibinfo{series}{Lecture Notes in Computer
  Science}} \bibinfo{volume}{1443}, \bibinfo{publisher}{Springer, Berlin}, pp.
  \bibinfo{pages}{628--641}.

\bibitemdeclare{article}{VW94}
\bibitem{VW94}
\bibinfo{author}{M.Y. Vardi} \& \bibinfo{author}{P.~Wolper}
  (\bibinfo{year}{1994}): \emph{\bibinfo{title}{Reasoning about Infinite
  Computations}}.
\newblock {\sl \bibinfo{journal}{Information and Computation}}
  \bibinfo{volume}{115}(\bibinfo{number}{1}), pp. \bibinfo{pages}{1--37}.

\end{thebibliography}
\end{document}